\let\newfloat\newfloat@ltx
\newtheorem{theorem}{Theorem}[section]
\newtheorem{lemma}[theorem]{Lemma}
\begin{document}
\preprint{APS/123-QED}
\title{Verifiable End-to-End Delegated Variational Quantum Algorithms}
\author{Matteo Inajetovic$^{1}$, Petros Wallden$^{2}$ and Anna Pappa$^{1}$}
\affiliation{$^1$ Electrical Engineering and Computer Science Department, Technische Universit{\"a}t Berlin}
\affiliation{$^2$ Quantum Software Lab, School of Informatics, University of Edinburgh}
\date{\today}

\begin{abstract}
\noindent Variational quantum algorithms (VQAs) have emerged as promising candidates for solving complex optimization and machine learning tasks on near-term quantum hardware. However, executing quantum operations remains challenging for small-scale users because of several hardware constraints, making it desirable to delegate parts of the computation to more powerful quantum devices. In this work, we introduce a framework for delegated variational quantum algorithms (DVQAs), where a client with limited quantum capabilities delegates the execution of a VQA to a more powerful quantum server. In particular, we introduce a protocol that enables a client to delegate a variational quantum algorithm to a server while ensuring that the input, the output and also the computation itself remain secret. Additionally, if the protocol does not abort, the client can be certain that the computation outcome is indeed correct.
This work builds on the general verification protocol introduced by Fitzimons and Kashefi (2017), tailoring it to VQAs.
Our approach first proposes a verifiable protocol for delegating the quantum computation at each optimization step of a VQA, and then combines the iterative steps into an error-resilient optimization process that offers end-to-end verifiable algorithm execution.
We also simulate the performance of our protocol tackling the Transverse Field Ising Model. Our results demonstrate that secure delegation of variational quantum algorithms is a realistic solution for near-term quantum networks, paving the way for practical quantum cloud computing applications.
\end{abstract}

\maketitle

\section{Introduction}
Quantum computing is poised to revolutionize information processing by leveraging the principles of quantum mechanics to solve problems that are intractable for classical computers. With advancements in quantum hardware, software, and algorithms, we are witnessing rapid progress toward practical quantum computation. However, the realization of fault-tolerant, large-scale quantum computers remains a long-term goal. In the near term, noisy intermediate-scale quantum (NISQ) devices—quantum processors with tens to hundreds of qubits but without robust error correction—represent the frontier of quantum technology. These NISQ devices offer promising opportunities for obtaining a quantum advantage in specific computational tasks, such as quantum simulation, optimization, and machine learning. Yet, their inherent noise and limited coherence times present significant challenges.

Variational quantum algorithms (VQA) have emerged as a promising approach to leverage NISQ devices to solve complex computational problems \cite{Cerezo_2021}. They are hybrid quantum-classical algorithms, in which a quantum processor executes a circuit with adjustable parameters, and a classical optimizer iteratively updates these parameters to minimize a given cost function. By combining quantum circuits with classical optimization techniques, VQAs can efficiently explore high-dimensional solution spaces while mitigating the limitations of NISQ hardware.

As quantum processors become more accessible via cloud-based platforms, it is natural to consider the concept of delegated variational quantum algorithms (DVQAs). Such a process is envisioned to enable a classical client to delegate complex quantum computational tasks to a remote quantum server while maintaining control over the optimization process. Ensuring the privacy and correctness of delegated computations is crucial, particularly in scenarios where the client lacks direct access to the quantum hardware.

Most delegated quantum computation protocols are instantiated in the measurement-based quantum computing (MBQC) model, which naturally allows for blindness and verifiability of the computation \cite{broadbent_universal_2009, Barz_2013, UVBQC17,  gheorghiu_verification_2019}. Compared to the circuit model, MBQC inherently supports these features due to the one-way structure and the decoupling of resource state preparation from the actual (measurement-based) computation. Moreover, MBQC is particularly well suited to photonic implementations and quantum network architectures, further reinforcing its advantages for delegated quantum computation.

In the context of variational quantum algorithms, a direct application of the aforementioned established delegation methods is not feasible. Due to the hybrid nature of VQAs, which involves classical optimization  and non-deterministic quantum computations, a tailored framework for delegating such algorithms is required. This article explores the potential of measurement-based delegated variational quantum algorithms (MB-DVQAs). While
previous studies have already provided the first investigations of MB-DVQAs \cite{li_quantum_2021, shingu_variational_2022, wang10039146, YANG2025131474}, they focused on the shot-level verifiability of single iterations of the quantum variational procedure and therefore did not achieve verifiability for the entire optimization process.
More specifically, given the number of shots required by VQAs, verifying every shot is not efficient.

We give an alternative protocol to achieve individual optimization step verifiability, and then propose a verifiable end-to-end MB-DVQA,
encompassing both quantum and classical computations involved.

\subsection*{Overview of the contributions}
\begin{enumerate}
    \item We show how to verify one step of an MB-DVQA (Protocol \ref{prot:VDgrad}) by using a trapification scheme that alternates computation and test rounds \cite{leichtle_verifying_2021}, and prove that it succeeds with probability that grows exponentially with the number of total rounds (Theorem \ref{th:ver_step}).
    \item We propose a new MB-DVQA using a custom gradient-descent optimizer (Protocol \ref{prot:VDvqa}) that reruns highly attacked iterations without aborting the full optimization and show that it is verifiable (Theorem \ref{th:ver_vqa}). 
    \item We benchmark the custom optimizer outlined in Protocol \ref{prot:VDvqa} through numerical simulations of a Variational Quantum Eigensolver (VQE) \cite{Peruzzo_2014} that finds the ground state energy of the  Transverse Field Ising Hamiltonian, incorporating simulated gradient perturbations that mimic adversarial attacks.
\end{enumerate}

\section{Background}
A central component in VQAs is a Parametrized Quantum Circuit (PQC). This circuit (also called Ansatz), is associated with a unitary $U(\theta)$, where $\theta$ is a vector of parametrized angles of dimension $N_P$. The main goal of a VQA is to minimize a cost function $f$ represented as:
\begin{align*}
        f(\theta)=Tr[OU(\theta)\rho_{in}U^{\dagger}(\theta)]
\end{align*}
where $\rho_{in}$ is the initial state and $O$ is the observable describing the problem to solve. In particular, $O$ is decomposed as sum of $N_o$ local Pauli operators $P_i$:

\[
O = \sum_{i=1}^{N_o} O_i=\sum_{i=1}^{N_o}c_i P_i
\]

We will focus on gradient-based optimization, making use of the Parameter Shift Rule \cite{Mitarai_2018,PhysRevA.99.032331} to compute an approximation $g$ of the gradient $\nabla f$. For the resulting vector $g=(g_1(\theta),\dots,g_{N_P}(\theta))$, it holds that for every element of the gradient vector the partial derivative is 
$\frac{\partial f}{\partial \theta}
=\frac{1}{2}[f_{+}-f_{-}]
$, where:
\[
f_{+}=f(\theta+\frac{\pi}{2}) \text{ and } f_{-}=f(\theta-\frac{\pi}{2}).
\]
This means that for a given PQC with $N_P$ parameters, the total number of function evaluations per step is $2N_P$. Using a fixed amount of measurement shots per function evaluation $N_s$, we require a total number of $2N_PN_s$ shots to perform one \emph{step} of the optimization process.

Till recently, all proposals of VQEs were based on the circuit model of computation. In the last years, variational quantum algorithms based on measurement-based quantum computation have also been proposed \cite{ferguson_measurement-based_2021,proietti_native_2022,schroeder_deterministic_2023,Majumder_2024,stollenwerk2024measurementbasedquantumapproximateoptimization, Marqversen_2023,Qin_2024}. MBQC \cite{raussendorf_one-way_2001} differs significantly from the conventional gate-based approach; instead of executing quantum gates sequentially, MBQC begins with the preparation of a highly entangled state, and the computation is then carried out through a sequence of adaptive quantum measurements that depend on the outcomes of the previous measurements. 

It has been shown that a client can securely and blindly delegate a quantum computation to a server by appropriately encrypting their input and the measurement angles of the MBQC pattern \cite{broadbent_universal_2009,fitzsimons2016privatequantumcomputationintroduction}.
However, another crucial property in delegated quantum computing is verifiability: the Client must be able to check whether the delegated computation has been executed correctly without malicious actions affecting the desired outcome. More specifically, the delegated protocol either outputs the desired computation or aborts. To ensure this property, it is sufficient to find an upper bound on the probability of failure, that is, the probability of having an incorrect quantum computation without detecting it by returning an abort signal.
Initially proposed in \cite{UVBQC17}, trapification allows to detect dishonest behavior by modifying the graph $G$ associated with the delegated computation, by inserting randomly prepared trap qubits in $\ket{+_{\theta}}=(\ket{0}+e^{i \theta}\ket{1})/\sqrt{2}$ and separating those from the rest of the computation by means of so-called dummy qubits, randomly chosen from $\{\ket{0},\ket{1}\}$.

In a previous study of delegated bounded-error quantum polynomial-time
computation \cite{leichtle_verifying_2021}, the proposed verification scheme interleaves $d$ \emph{computation rounds} and $t$ \emph{test rounds} (for a total of $n=d+t$ rounds), rather than inserting traps into the graph $G$ associated with the desired computation. The latter approach results in an enlarged graph, whereas interleaving rounds maintain the original graph size.  In the computation rounds, the Client prepares and sends the qubits and the measurements associated to the desired computation unitary $U$ (defined by graph $G$), while the test rounds instead are just made up of traps and dummy qubits. In particular, \cite{leichtle_verifying_2021} makes use of a $c$-coloring of $G$ to place traps and dummy qubits. By having the same graph structure for both computation and test rounds, the Server cannot distinguish them due to blindness inherited by previous works \cite{broadbent_universal_2009}.

\section{Methods}
Our goal is to provide an MB-DVQA that converges and is at the same time blind, secure and verifiable. Since blindness and security are directly inherited from previous works on delegated MBQC \cite{Dunjko_2014}, the only thing that remains to address is verifiability. We first propose a verifiable protocol (Protocol \ref{prot:VDgrad}) computing the gradients of one MB-DVQA step using the parameter shift rule. In particular, we provide a bound (Lemma \ref{lem:rel_error}) on the gradient's relative error in case of adversarial behavior, namely when cost function evaluations are affected by a certain number of corrupted computation rounds. Subsequently, we make use of this result to come up with a trapification scheme based on \cite{leichtle_verifying_2021}, yielding an undetected highly corrupted outcome with a very low probability (Theorem \ref{th:ver_step}).
We then exploit this protocol within the context of a complete MB-DVQA using gradient descent to minimize the cost function. Specifically, we modify the classical optimization part of the VQA by re-performing highly attacked steps detected by the step-level protocol and prove convergence and verifiability (Theorem \ref{th:ver_vqa}) for the full MB-DVQA protocol (Protocol \ref{prot:VDvqa}).

\subsection{Verifiability for one step of the MB-DVQA}
\label{sec:verif_pqc}

Existing blind and verifiable MBQC protocols can in principle be directly applied to VQAs at shot-level computation. Nevertheless, this approach is impractical given the large number of shots required by the total number of function evaluations in the variational procedure. Furthermore, running a verification protocol for each shot and abort every corrupted one implies a heavy consumption of runtime resources. 
The robust VBQC protocol proposed by \cite{leichtle_verifying_2021} allows to delegate multiple shots at once and appears as a better candidate to delegate variational algorithms. However this protocol, designed for deterministic BQP computations, does not account for the probabilistic, expectation-value-based cost functions required for variational parameter updates. Consequently, adapting this protocol to VQAs requires a tailored verification framework that aggregates measurements across shots and handles the multiple evaluations required per optimization step, while tolerating a small, non-critical fraction of corrupted shots.
Therefore the aim of Protocol  \ref{prot:VDgrad} is to delegate the quantum computation required to compute the gradient of the cost function with a relative error bounded by a target value  $\varepsilon_{th}$. This quantum computation consists of all the $2N_P$ evaluations of the cost function $f$, each performed $N_s$ times.

To provide verifiability for this protocol and allow the client to check for the correctness of the computation, we adapt the trapification protocol proposed in \cite{leichtle_verifying_2021}. In that work, majority vote was used to bound the probability of failure of the verification process for BQP computations. Here instead, due to the probabilistic nature of PQCs, we consider the computation outcome through the expectation value of observables, which are typically used to evaluate the cost function. We therefore execute $d=2N_PN_s$ computation rounds and $t$ test rounds shared across different quantum computations (i.e. distinct $f$ evaluations). This allows us to reduce the amount of required test rounds compared to the case where each quantum computation is delegated separately.

\begin{algorithm}
\caption{\raggedright Verifiable Step of the MB-DVQA} 
\begin{algorithmic}[1]
  \STATE \textbf{Inputs}: $G$ and associated $c$-coloring, $\theta=\{\theta_i\}_{i=1}^{N_P}$, $N_s$, $\varepsilon_{th}$.
    \STATE \textbf{Protocol:}
    \begin{enumerate}
       \item Suppose $n=d+t$ the number of rounds, where $d=2 N_PN_s$ are the computation and $t$ are the test rounds and  the Client chooses at random which round belongs in each set. For each round:
        \begin{itemize}
            \item if it is a test round, the client chooses a random graph coloring and sends the trap and dummy qubits to the server.
            \item if it is a computation round, the client sends the qubits associated to $G$.
            \item the server performs the blind measurement based computation and returns the output.
        \end{itemize}
        \item The client gathers and decodes the server's output:
        \begin{itemize}
             \item in case she detects a trap failure, she \textbf{aborts}.
             \item otherwise, she \textbf{accepts}, obtaining $g(\theta)$ with a relative error up to $\varepsilon_{th}$.
        \end{itemize}  
    \end{enumerate}    
\end{algorithmic}
\label{prot:VDgrad}
\end{algorithm}


In order to prove verifiability for our protocol, we first establish a connection between the relative error in the gradient approximation and the adversarial attacks, represented in our framework by means of the number of corrupted computation rounds $\delta$.

\begin{lemma}
    We assume a DVQA step performing a Parameter Shift Rule to compute a gradient approximation $g$ of $\nabla f$ with a global observable $O=\sum_{i}c_iP_i$. Using $N_s$ shots per circuit evaluation,
    the gradient relative error $\varepsilon = \frac{\|g-\hat{g}\|_2}{\|g\|_2}$ and the number of corrupted computation rounds $\delta$ are related as follows:
\begin{align}
\varepsilon\leq\frac{\sum_{i}|c_i|}{\epsilon_0 N_s} \delta\end{align}
where $\epsilon_0 > 0$ is a lower bound on the gradient norm, 
i.e. $\|g\|_2 \geq \epsilon_0$.
\label{lem:rel_error}
\end{lemma}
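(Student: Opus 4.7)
The plan is to translate the $\delta$ corrupted computation rounds into an absolute perturbation on each component of $g$, and then normalize by the reference scale $\epsilon_0$ appearing in the denominator.

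First I would bound the worst-case shift on any single expectation-value estimate. Each computation round contributes exactly one shot to an estimator of either $f(\theta + \tfrac{\pi}{2}e_j)$ or $f(\theta - \tfrac{\pi}{2}e_j)$ for some parameter index $j$. Since $O=\sum_i c_i P_i$ with each Pauli $P_i$ having operator norm one, a single-shot outcome lies in the interval $[-\sum_i |c_i|,\sum_i |c_i|]$. Hence replacing $\delta'$ of the $N_s$ shots contributing to any one estimator $\hat f_\pm$ by arbitrary adversarial values can shift the sample mean by at most $\frac{2\sum_i |c_i|}{N_s}\,\delta'$.

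Next I would propagate this worst-case shift through the parameter shift rule. Because $g_j = \tfrac12(\hat f_+ - \hat f_-)$, and because the adversary is free to distribute its $\delta$ corruptions across the two estimators $\hat f_+,\hat f_-$, the triangle inequality yields
\begin{equation*}
|g_j - g_j^{\mathrm{true}}| \;\leq\; \tfrac{1}{2}\cdot\tfrac{2\sum_i |c_i|}{N_s}\cdot\delta \;=\; \tfrac{\sum_i |c_i|}{N_s}\,\delta,
\end{equation*}
where the $1/2$ from the PSR exactly cancels the factor of $2$ coming from the range of a single shot. Dividing both sides by $\epsilon_0$, the reference magnitude of the true gradient used to define the relative error $e$, delivers the advertised inequality.

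The main obstacle will be accounting for the adversary's freedom in allocating corruptions: the $\delta$ bad rounds can be concentrated entirely in a single $\hat f_\pm$ or spread across the $2N_P$ estimators in any way, so one must argue that the worst case indeed saturates the bound and is independent of the distribution. A secondary subtlety is pinning down the role of $\epsilon_0$: it has to be a meaningful reference scale for the true gradient (for instance a lower bound on $|g_j^{\mathrm{true}}|$ componentwise, or on $\|\nabla f\|$ if $e$ is interpreted at vector level), so that the passage from the absolute bound to the relative-error statement is rigorous and not merely heuristic.
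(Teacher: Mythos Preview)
Your approach is essentially the paper's: bound the effect of corrupted shots on a single function estimate, halve through the parameter-shift rule, then divide by a reference gradient scale. The two points you flag as obstacles are exactly where the paper is more explicit, and it resolves them slightly differently than you anticipate.

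On the allocation of corruptions, the paper works at the vector level rather than componentwise. Writing $\delta=\sum_j(\delta_{j+}+\delta_{j-})$ for the split across the $2N_P$ estimators, it bounds $\|g-\hat g\|=\sum_j|g_j-\hat g_j|\le \tfrac{\sum_i|c_i|}{N_s}\sum_j(\delta_{j+}+\delta_{j-})=\tfrac{\sum_i|c_i|}{N_s}\,\delta$, so the distribution question disappears once the components are summed. Correspondingly, $\epsilon_0$ is taken as a lower bound on $\|g\|$, not on each $|g_j|$; the relative error $e$ is the vector-level quantity $\|g-\hat g\|/\|g\|$.

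One genuine difference in modelling: the paper does \emph{not} treat a single shot as sampling the full $O$. Each shot measures one local Pauli $P_i$ (with $N_s/N_o$ shots per term), so a corrupted shot shifts one eigenvalue by at most $2$. The factor $\sum_i|c_i|$ then enters via an explicit assumption that corruptions are equally spread across the $N_o$ Pauli terms. Your route, which puts a single-shot outcome in $[-\sum_i|c_i|,\sum_i|c_i|]$, implicitly assumes the $P_i$ are simultaneously measurable. Both give the same constant, but note that under the paper's per-Pauli model the genuinely worst-case adversary (concentrating on the term with largest $|c_i|$) would yield $N_o\max_i|c_i|$ rather than $\sum_i|c_i|$; the paper's bound relies on the equal-distribution assumption, not on a worst-case argument.
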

\begin{proof}
When we perform one execution of the PQC, the state collapses to an eigenstate of the measured observable. Therefore, the finite-shot estimator of the cost function, 
obtained by distributing $N_s$ shots across the $N_o$ Pauli terms, can be expressed as:
\[
f =  \frac{N_o}{N_s}\sum_{i=1}^{N_{\text{o}}}c_i \sum_{k=1}^{N_s/N_o}
 \lambda^{i}_{k}
\]

Let's consider a single eigenvalue $\lambda^i_k$ of the local Pauli $P_i$ measured at the $k$-th round. Due to attacks, in the worst-case scenario, we instead measure a corrupted eigenvalue equal to the most distant w.r.t. the correct $\lambda^i_k$.  One can bound the difference between the correct and the worst-case corrupted eigenvalues (respectively $\lambda^i_k$ and $\hat{\lambda}^i_k$) as $\lambda^i_k-\hat{\lambda}^i_k \leq 2$, since Pauli operators have eigenvalues $+1$ or $-1$.

Due to the blindness property of our delegated protocol, we can assume that on average, the $\delta'$ corrupted executions are uniformly distributed among the $N_o$ Pauli observables, yielding an expected $\delta'/N_o$ corrupted measurements per local expectation value.
Therefore, the absolute difference between the ideal function $f$ and its corrupted version can be bounded as follows:
\begin{align*}
    |f-\hat{f}| \leq
\frac{N_o}{N_s}\sum_{j=1}^{N_{\text{o}}}|c_j\sum_{k=1}^{N_s/N_o}
 (\lambda^{j}_{k}-\hat{\lambda}^j_k)|
 \leq\frac{2\delta'}{N_s}\sum_{j=1}^{N_{\text{o}}}|c_j|
\end{align*}

For each component of the gradient vector obtained with the parameter shift rule we get:
\begin{align*}
    |g_i-\hat g_i|&=\frac{1}{2}|(f_{i+}-\hat{f}_{i+} ) - (f_{i-}-\hat{f}_{i-})| \\ &\leq\frac{(\delta_{i+}+\delta_{i-})}{N_s}\sum_{j=1}^{N_o}|c_j|.
\end{align*}

The total number of corrupted computation rounds per step, distributed across the function evaluations required for all the components of the gradient vector, is expressed as $\delta = \sum_{i=1}^{N_P} \delta_i, \quad \text{where }\delta_i=\delta_{i+}+\delta_{i-}$. 

Given the relative error $\varepsilon = \frac{\|g-\hat{g}\|_2}{\|g\|_2}$, 
the numerator can be bounded using the component-wise absolute 
error bound:
\begin{align*}
    |g_i - \hat{g}_i| \leq \frac{\sum_j|c_j|}{N_s}\delta_i
\end{align*}
\begin{align*}
\|g-\hat{g}\|_2 
= \sqrt{\sum_{i=1}^{N_P}|g_i - \hat{g}_i|^2}
\leq \frac{\sum_j|c_j|}{N_s}\sqrt{\sum_{i=1}^{N_P}\delta_i^2}
\leq \frac{\sum_j|c_j|}{N_s}\delta
\end{align*}
where in the last step we used 
$\sqrt{\sum_i \delta_i^2} \leq \sum_i \delta_i$ 
when all $\delta_i \geq 0$.
Assuming $\|g\|_2 \geq \epsilon_0 > 0$, we obtain:
\begin{align}
\varepsilon \leq \frac{\sum_j|c_j|}{\epsilon_0 N_s}\delta
\end{align}
\end{proof}

Given a desired target relative error $\varepsilon_{th}$, Lemma \ref{lem:rel_error} provides a bound on the tolerated corrupted computation rounds $\delta_{max}$ affecting one Step:

\begin{align}
    \delta \leq \delta_{max} = \frac{\varepsilon_{th}\epsilon_0N_s}{\sum_{j}|c_j|}= \frac{\varepsilon_{th}\epsilon_0d}{2N_P\sum_{j}|c_j|}=wd
\label{eq:delta_max}
\end{align}
where $w= \frac{\varepsilon_{th}\epsilon_0}{2N_P\sum_{i}|c_i|} $.\\

Now that we have a bound on $\delta$, we can proceed with showing verifiability of a MB-DVQA step.

\begin{theorem}
Let $g$ denote the Parameter Shift Rule approximation of $\nabla f$, and assume that $\|g\|_2 \geq \epsilon_0 > 0$.
  An MB-DVQA step following Protocol \ref{prot:VDgrad} computes  $g$ with relative error up to $\varepsilon_{th}$ and probability of failure decreasing exponentially with the number of rounds $n$.
\label{th:ver_step}
\end{theorem}

\begin{proof}
Given a total of $m$ attacked rounds, let $X_d\approx\text{Binomial}(m,\frac{d}{n})$ be the random variable describing the probability of computation rounds being attacked. 
Also, let $X_t \approx \text{Binomial}\left(\left(\frac{m}{n}-\epsilon_1\right)t,\frac{1}{c}\right)$ be the random variable related to the number of attacks detected by the traps in the trapification scheme (with some $\epsilon_1>0$).
Therefore, the probability of failure will be the probability of having more than $\delta_{max}$ corrupted computation rounds and no detected attacks by any trap. This probability can be bounded as:

\begin{align*}
& \text{P(fail)}\leq  \max_{m \in [0, n]} \operatorname{Pr}\left[X_d \geq \delta_{max} \land X_t < 1 \right] \\
&\leq \max \left(\max_{m < nw}\operatorname{Pr}\left[X_d \geq \delta_{max}\right], \max_{m \geq nw }\operatorname{Pr}\left[X_t < 1\right]\right)
\end{align*}

When $m\geq nw$,
considering that computation and test rounds are chosen at random, the number of affected computation rounds $m\frac{d}{n}$ is higher than $\delta_{max}$, leading to $\text{Pr}[X_d > \delta_{max}]=1$. Therefore, we focus on the test rounds term to upper bound the probability of failure.Instead, when $m<nw$, we upper bound the term related to the computation rounds.

\textbf{A).} 
Let's define $X' \approx \text{Binomial}(m,\frac{t}{n})$  to be the probability distribution describing the attacked test rounds. By the law of total probability, 
the probability to have no detected attacks in the test rounds is:

\begin{align*}
\notag \text{Pr}(X_t<1) &= \text{Pr}\left[X_t = 0 | X'\leq k
 \right]\text{Pr}\left[X'\leq k
 \right] \\
&+ \text{Pr}\left[X_t = 0 | X'> k
\right] \text{Pr}\left[X'> k
\right] \\
&\leq \text{Pr}\left[X'\leq k
 \right]+\text{Pr}\left[X_t = 0 | X'> k
 \right]
\end{align*}
where $k= \left(\frac{m}{n}-\epsilon_1\right)t$.

Using Hoeffding's inequality and the fact that $ (1-1/c)^k \leq e^{-k/c}$, we can rewrite the above probability (with $\tau=t/n$):
\begin{align*}
    \text{Pr}(X_t<1)&\leq
    \text{exp}\left(\frac{-2\epsilon_1^2\tau^2n^2}{m}\right)+
    \text{exp}\left(- \left(\frac{m}{n} -\epsilon_1\right)\frac{\tau n}{c}\right) 
\end{align*}

Since $w \leq 1$ 
($m$ always s.t. $m\leq n$), 
we have $n\leq n^2/m \leq n/w$, 
and therefore $\exp\!\left(-2\epsilon_1^2\tau^2\frac{n^2}{m}\right) 
\leq \exp\!\left(-2\epsilon_1^2\tau^2 n\right)$, yielding:
\begin{align*}
\text{Pr}(X_t<1)\leq \text{exp}\left(-2\epsilon_1^2\tau^2 n\right)+ \text{exp}\left(-\frac{\tau n}{c}\left(w -\epsilon_1\right)\right) 
\end{align*}

Both terms, with $w >\epsilon_1$, decay exponentially with the number of copies $n$.

\textbf{B).} When $m<nw$, using Hoeffding's inequality:
\begin{align*}
    \text{Pr}[X_d > \delta_{max}]
    \leq  \text{exp}\left(-\frac{2d^2}{m}\left(\frac{m}{n}-w\right)^2\right)
\end{align*}
Therefore we have a bound on the probability of failure decaying exponentially w.r.t. the number of copies $n$ and the number of test rounds $t$, since $d$ is fixed.

\end{proof}

So far, the verifiability of one step of the MB-DVQA has been shown. However, it might not be clear how to choose the threshold on the tolerated gradient relative error $\varepsilon_{th}$. In the next section, we will show how to appropriately set this bound to ensure both convergence of the optimization and verification of the full MB-DVQA.

\subsection{Verifiable MB-DVQA with Gradient Descent}
\label{sec:varif_vqa}

In this section we explore what happens when we build a verifiable MB-DVQA protocol (Protocol \ref{prot:VDvqa}) using Protocol \ref{prot:VDgrad}. Given limited resources, namely a total number of available rounds, the MB-DVQA protocol runs Protocol \ref{prot:VDgrad} for each step of the optimization. In case a step aborts, the protocol rejects the last gradient computation and re-performs the same step; otherwise it accepts the outcome (with a relative error at most $\varepsilon_{th}$) and updates the variational parameters. 

In the scenario where a delegated VQA is executed with verifiability provided only at the step level, there is no guarantee regarding the final outcome of the variational optimization, namely the final value of $f$. Although highly corrupted steps can be rerun, once the MB-DVQA is terminated, we cannot clearly state if the outcome matches the optimum of the cost function $f^*$. Therefore, such a protocol would have no practical means of knowing whether an execution is successful. To address this gap, we introduce protocol \ref{prot:VDvqa} that achieves verifiability by means of geometric convergence of the optimization procedure, as shown in Theorem \ref{th:ver_vqa}.
This is done by properly tuning $\varepsilon_{th}$ depending on the problem, a task made possible by the result achieved in the previous sections with Theorem \ref{th:ver_step} and Lemma \ref{lem:rel_error}.
Given convergence guarantees, once the protocol terminates, the Client either obtains the desired outcome $f^*$ or receives an abort response, achieving verifiability.

\begin{theorem}
    Given a $\mu$-strongly convex cost function $f$ with $L$-Lipschitz continuous gradient and a global optimum $f^*$, let $g$ denote the Parameter Shift Rule approximation of $\nabla f$ and assume that $\|g\|_2 \geq \epsilon_0 > 0$.
    Protocol \ref{prot:VDvqa} ensures verifiability, provided that the learning rate $\alpha$ and error bound on the gradient $\varepsilon_{th}$ satisfy the following conditions:

\begin{enumerate}
    \item $\alpha L < \frac{2(1-\varepsilon_{th})^2 - \varepsilon_{th}^2}{(1-\varepsilon_{th})(1+\varepsilon_{th})^2},$
    \item $\mu\alpha\left[2(1-\varepsilon_{th})-\alpha L(1+\varepsilon_{th})^2-\frac{\varepsilon_{th}^2}{1-\varepsilon_{th}}\right]< 1 .$
\end{enumerate}
\label{eq:th2_cond}
\label{th:ver_vqa}
\end{theorem}

\begin{algorithm}
 \caption{\raggedright Verifiable MB-DVQA with gradient descent}
 \begin{algorithmic}
    \STATE \textbf{Inputs}: Initial angles $\theta^1$, maximum number of steps $N_{\text{iter}}$, a $c$-coloring, 
           $N_s$, $\varepsilon_{th}$  and learning rate $\alpha$.
    \STATE \textbf{Protocol:}
    \begin{enumerate}
        \item For $k \in [1,...,N_{\text{iter}}]$, run Protocol \ref{prot:VDgrad} to compute $\hat{g}(\theta^{k})$:
        \begin{itemize}
            \item If it accepts: $\theta^{k+1}=\theta^{k}-\alpha \hat{g}(\theta^{k})$.
            \item If it aborts: $\theta^{k+1}=\theta^{k}$ (i.e. re-run the same step).
        \end{itemize}  
        \item Set the final angles: $\theta^* \leftarrow \theta^{N_{\text{iter}}}$.
        \item If the optimization hasn't converged, \textbf{aborts}; otherwise \textbf{accepts} returning $\hat{f}^*$.
    \end{enumerate}  
 \end{algorithmic}
  \label{prot:VDvqa}
\end{algorithm}

\begin{proof}[Sketch of Proof]
   As explained in detail in Appendix \ref{sec:grad_conv_2}, a VQA using Stochastic Gradient Descent to optimize a function $f$ can converge in the presence of adversarial attacks provided that the gradient relative error remains sufficiently bounded. If no such bound exists, multiple steps with large gradient errors may occur, potentially jeopardizing the optimization and leading to divergence. To prevent this and ensure convergence, Protocol \ref{prot:VDvqa} integrates Protocol \ref{prot:VDgrad}, which detects steps where the relative error exceeds the threshold $\varepsilon_{th}$. Such steps are then re-executed until the conditions of the Theorem hold. 
   These inequalities, derived in the full proof (Appendix \ref{sec:grad_conv_2}), are related to standard conditions required for gradient descent optimization, namely having $0<\gamma<1$ such that for each optimization step $k$:
   \[
   \mathbb{E}[f^{k+1}]-f^* \leq \gamma (\mathbb{E}[f^{k}]-f^*)
   \]
   Notably, when $\varepsilon_{th}\rightarrow0$, the inequalities above reduce to the standard gradient descent conditions for convergence, having $\gamma=1-\mu\alpha(2-\alpha L)$.
   This guarantees the average decrease of the function required for convergence and also bounds the final error neighborhood in the presence of shot-noise, albeit at the cost of an overhead of steps due to repetitions.
   Finally, we associate convergence guarantees with verifiability: since the Client has access to classical gradient information at each step $k$, the protocol either outputs the correct result, namely the optimal value of the cost function achieved due to convergence, or aborts.
\end{proof}

While VQA landscapes are generally non-convex, convexity-based analyses—common also in classical machine learning—can still provide useful insight, for example when initialization or ansatz design places the optimizer in more regular regions of the landscape. In the above theorem, we adopt the strong convexity assumption as a simplified analytical setting to enable convergence analysis \cite{Sweke2020stochasticgradient}. This assumption also guarantees that the lower bound on the gradient norm $\epsilon_0$ in Lemma~\ref{lem:rel_error} is sufficiently large and therefore provides a meaningful bound for the gradient relative error $\varepsilon$. 
Our assumptions do not cover scenarios where the cost landscape is affected by barren plateaus or local minima; addressing fully realistic, problem-specific landscapes and alternative optimization methods is left for future work.

\section{Simulations}
We start by benchmarking the performance of Protocol \ref{prot:VDvqa}, to demonstrate its effectiveness. We do this
by trying to find the ground state of the Transverse Field Ising Model (TFIM), a fundamental quantum spin model describing a system of interacting spins (typically on a lattice):
\begin{align*}
H=h\sum_iX_i + \sum_{<i,j>}Z_iZ_j
\end{align*}
where $<i,j>$ determines the edges of the lattice and $h$ is the strength of the external transverse field.
We express attacks as perturbations on the gradient vector for each step $k$ (in the context of Protocol \ref{prot:VDvqa}):
\begin{align}
\hat{g}_i(\theta^k) = g_i(\theta^k)+ \nu_i^k \end{align}
Each $\nu_i^k$ is drawn independently from a uniform distribution $
 \mathcal{U}(-\Delta, \Delta)
$, where $\Delta$ is a parameter tuning the impact of the perturbation. The perturbation vector is defined as:
\[\nu^k =
\begin{cases}
(\nu_1^k, \nu_2^k, \ldots, \nu_{N_P}^k), & \text{with probability } p \\
(0, 0, \ldots, 0), & \text{with probability } 1 - p
\end{cases}
\]

The simulation of the computation rounds in the absence of quantum hardware noise is carried out using the Qrisp programming framework \cite{seidel2024qrispframeworkcompilablehighlevel}.
For now we assume the existence of traps allowing for a negligible probability of failure and leave the simulation of the test rounds for later. In this way, we can simulate the trap detection by evaluating the relative error. Protocol \ref{prot:VDvqa} re-performs heavily attacked steps without aborting the full VQE and accepts steps that fall below the error threshold.

We show examples for two-dimensional TFIMs using $p=0.7$ and an ansatz with $N_L$-layers and $n$ qubits, each layer $U_l$ is given by: 
\[
U_l(\theta_l) = \left(\prod_{i=1}^{n-1} \text{CNOT}_{i, i+1}\right) \left(\bigotimes_{i=1}^{n} R_y(\theta_l^i)\right)
\]

\begin{figure}[h!]
\includegraphics[width=0.45\textwidth]{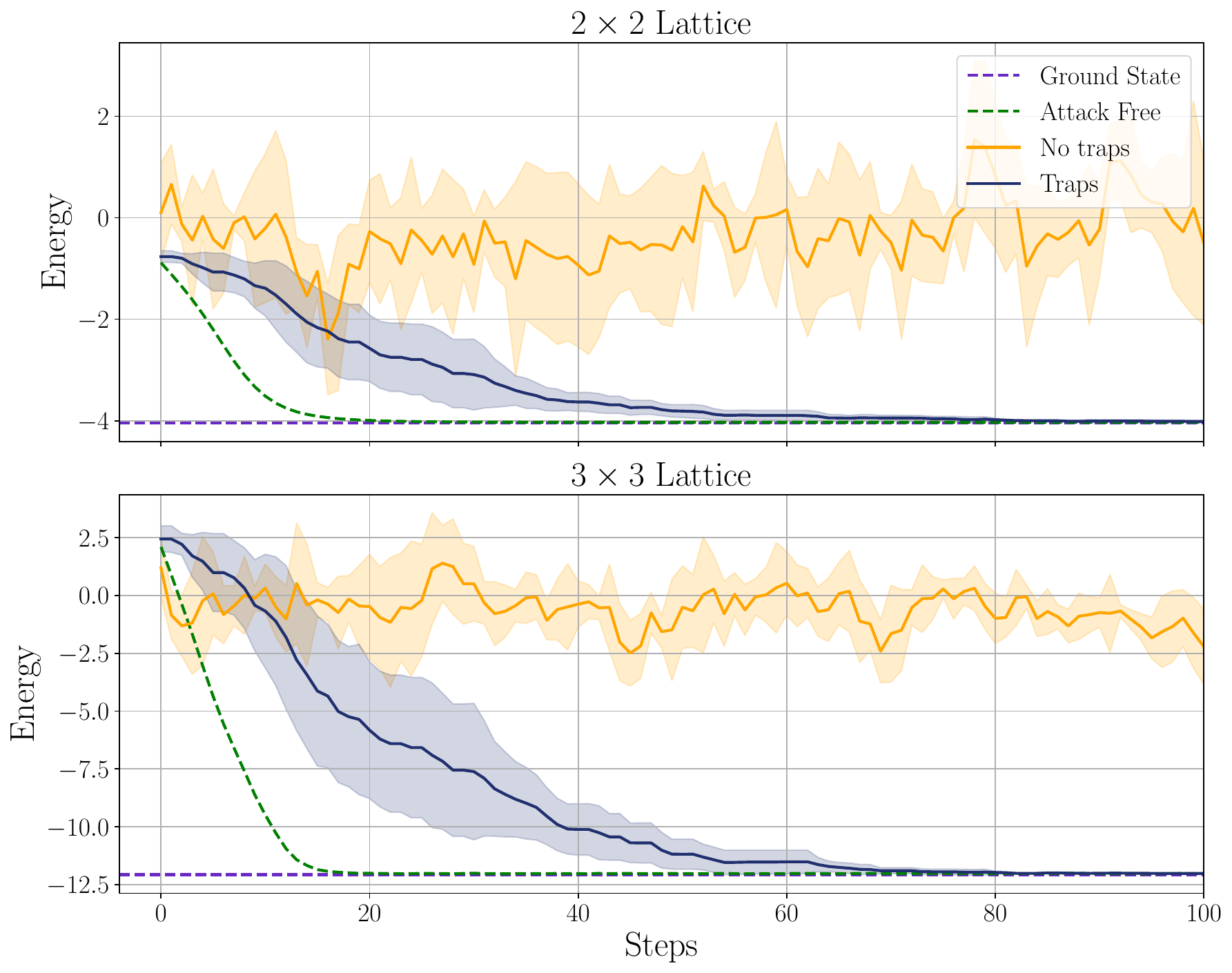}
    \caption{Average performance over 5 runs comparing the proposed protocol and a standard VQE in presence of the described attacks with $p=0.7$. We tackle the following two-dimensional TFIM lattices: $2 \times 2$ and $3 \times 3$ with $h=0.2$. The dashed green line (Attack Free) refers to an ideal attack-free scenario, the yellow line (No traps) is related to a non-verifiable delegated VQE in presence of attacks  and finally, the proposed protocol is shown in blue (Traps). We use $N_s=1000$, $N_L=2$ layers and a learning rate $\alpha=0.2$ for the optimization.}
    \label{fig:TFIM2D}
\end{figure}

\begin{figure}[h]
\includegraphics[width=0.48\textwidth]{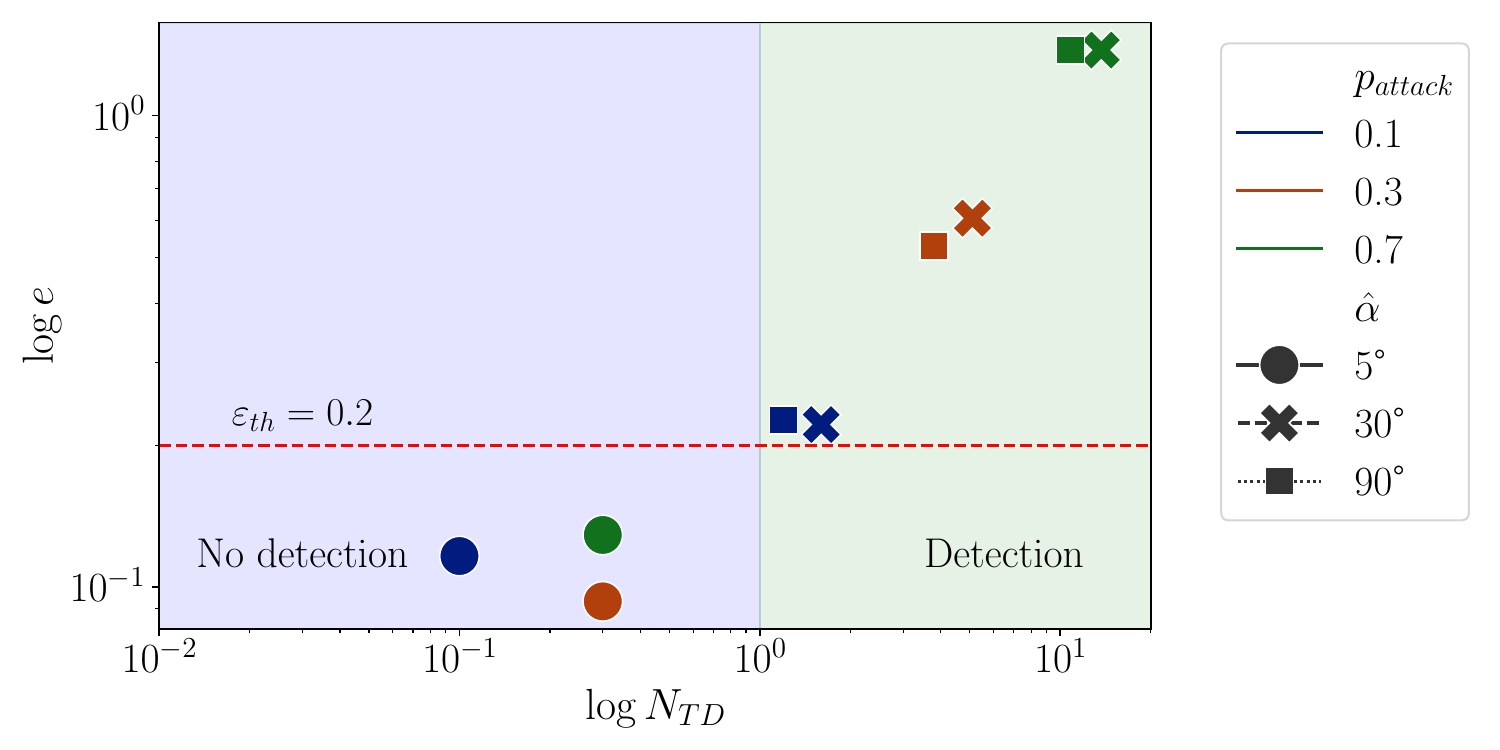}
    \caption{ Plot of relative error  $\varepsilon$ versus number of trap detections $N_{TD}$ in one execution of Protocol \ref{prot:VDgrad} (in logarithmic scale). Here, we choose $d=4000$ and $t=50$. $p_{attack}$ is the probability that a round is attacked and $\hat{\alpha}$ shows examples of a degree shift applied by an attacker on the measurement angles of the output qubits. The points represent average scores computed over 10 runs for each setting. As expected, when the relative error exceeds the threshold $\varepsilon_{th}$ (shown with the red dashed line), the attack is detected.}
    \label{fig:step_figure}
\end{figure}

From Fig. \ref{fig:TFIM2D} we can see how a standard VQE using gradient descent diverges due to attacks, while the proposed approach, as expected, turns out to be robust and converges to the optimal solution with an overhead of steps with respect to the ideal attack-free case.
Using shot-level verification would significantly increase the number of aborted shots over the full VQE run, extending the runtime. In contrast, our approach tolerates a reasonable fraction of corrupted shots, while enabling convergence within a reasonable timeframe.

It is important to emphasize that our protocol does not depend on the aforementioned attack model. The latter has been designed to illustrate the performance of the proposed work.
As reported in the previous section, Protocol \ref{prot:VDvqa} is general and robust, providing tools to adjust the number of traps to ensure convergence, independent of the attack model.

To complement the numerical analysis, we provide simulations of Protocol \ref{prot:VDgrad} tackling a 2-qubit TFIM. In particular, we make use of Veriphix \cite{Abdul_Sater_Veriphix}, a library for verifiable blind quantum computing based on the MBQC programming framework Graphix \cite{sunami_graphix_2022}.
In Fig. \ref{fig:step_figure} we show how the number of test rounds $t$ can be tuned to detect highly corrupted steps depending on $\varepsilon_{th}$, the threshold value required by Protocol \ref{prot:VDvqa} to ensure convergence.
The Python code used to reproduce the simulations is openly available at \cite{githubcode}.

\section{Conclusion}
This work offers valuable insights into the future of delegated quantum computation in the NISQ era and beyond, by proposing an end-to-end verifiable delegated variational quantum protocol. The theoretical results are further validated with simulations that tackle the Transverse Field Ising Model. Although our work is currently based on gradient descent optimization, it is readily adaptable to alternative optimization techniques. Specifically, Protocol \ref{prot:VDgrad} can be modified to accommodate different optimizers, leading to appropriately revised versions of Lemma \ref{lem:rel_error} and Theorem \ref{th:ver_step}. 
On the other hand, with regard to the verifiability of the overall VQA, an optimizer-specific Theorem should be provided to guarantee convergence and consequentially, verifiability.
Furthermore, in our protocol, we assume a fixed relative error threshold for all the steps. In future work, the threshold across the steps could be varied depending on the current gradient value and accordingly the number of traps required for each step could be adjusted, thus reducing the total number of delegated computations. Another extension of this work could involve an improvement of the trapification procedure, for instance exploiting the scheme based on stabiliser testing proposed in \cite{Kapourniotis_2024}. 
Finally, although circuit-based DVQA protocols have already been proposed (such as the one by Li et al. \cite{li2023circuitDVQAQHE} based on quantum homomorphic encryption) they currently only provide blindness. It remains an open and interesting question how verification could be incorporated in such schemes, either through adaptations of our approach or via dedicated techniques like accreditation \cite{Ferracin_2021}.

\begin{acknowledgments}
M.I. and A.P. acknowledge support from the Federal Ministry for Economic Affairs and Climate Action (project Qompiler, grant No: 01MQ22005A), the DFG (Emmy Noether programme, grant No. 418294583 and SPP2514), the BMBF (project tubLANQ.0, grant No. 16KISQ087K) and the Berlin Quantum Alliance. P.W. acknowledges support by EPSRC grant
EP/T001062/1.
\end{acknowledgments}

\bibliographystyle{apalike}
\bibliography{refs}

\begin{widetext}
\newpage
\section{APPENDIX}
\subsection{Useful Lemmas}

\begin{lemma}
(Hoeffding's inequality)
Given a binomial distribution $X \approx \text{Binomial}(n,p)$ an upper bound to the cumulative distribution function can be obtained for $k\leq np$:

\begin{align*}
    \Pr(X \le k) \leq \exp\left(-2 n\left(p-\frac{k}{n}\right)^2\right)
\end{align*}
\label{lem:hoeffding}

\end{lemma}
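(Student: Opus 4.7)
The plan is to use the classical Chernoff bound combined with Hoeffding's single-variable lemma for bounded random variables, specialized to the Bernoulli case. Since $X \sim \text{Binomial}(n,p)$, I would decompose $X = \sum_{i=1}^{n} X_i$ with independent Bernoulli($p$) summands $X_i \in \{0,1\}$, and consider the centered variables $Y_i = p - X_i$, which satisfy $\mathbb{E}[Y_i]=0$ and $Y_i \in [p-1,p]$, an interval of length exactly $1$.

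The first real step is the exponential Markov inequality. For any $s>0$,
\[
\Pr(X \le k) \;=\; \Pr\!\left(\sum_{i=1}^n Y_i \,\ge\, np-k\right) \;\le\; e^{-s(np-k)}\prod_{i=1}^n \mathbb{E}\!\left[e^{sY_i}\right],
\]
where independence lets the MGF factor. The hypothesis $k \le np$ guarantees $np-k \ge 0$, so the exponent $-s(np-k)$ is non-positive and tightening over $s>0$ makes sense.

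The core technical ingredient is Hoeffding's lemma for a zero-mean variable $Y$ supported in an interval of length $\ell$: $\mathbb{E}[e^{sY}] \le \exp(s^2\ell^2/8)$. I would prove this by convexity, using that $e^{sy}$ lies below the secant chord joining its endpoint values on $[a,b]$, then simplifying the resulting logarithm and showing its second derivative in $s$ is bounded by $(b-a)^2/4$, so that Taylor's theorem with zero linear term yields the $s^2(b-a)^2/8$ bound. Applied with $\ell = 1$ to each $Y_i$, this gives $\mathbb{E}[e^{sY_i}] \le e^{s^2/8}$, hence
\[
\Pr(X \le k) \;\le\; \exp\!\left(-s(np-k) + \tfrac{n s^2}{8}\right).
\]

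The final step is to optimize in $s$: setting the derivative to zero yields $s^\ast = 4(np-k)/n \ge 0$, and substituting back produces the exponent $-2(np-k)^2/n = -2n(p - k/n)^2$, exactly the claimed bound. The main obstacle is really only the proof of Hoeffding's lemma itself; everything else is a routine assembly of Markov's inequality, independence, and a scalar optimization. Since the lemma is textbook, I would either give the short convexity-plus-Taylor argument inline or simply cite the standard reference and invoke it, as the rest of the derivation is mechanical.
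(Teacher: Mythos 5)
Your proof is correct: the decomposition into Bernoulli summands, the Chernoff exponentiation, Hoeffding's lemma with interval length $1$, and the optimization $s^\ast = 4(np-k)/n$ yielding the exponent $-2n\left(p-\tfrac{k}{n}\right)^2$ all check out. The paper itself states this lemma without proof, treating it as a standard textbook result, so there is no in-paper argument to compare against; your derivation is the canonical one and would serve as a complete justification if one were required.
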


\begin{lemma}
    (Lemma, page 870 in \cite{Polyak1963GradientMF}) Given $f:\mathbb{R}^d\rightarrow\mathbb{R}$ with optimum $f^*$ and $\mu$-strongly convex, it holds that:
\[
    2\mu (f(x)-f^*)\leq \|\nabla f(x)\|^2, \forall x \in \mathbb{R}^d
\]
\label{lem:polyak}
\end{lemma}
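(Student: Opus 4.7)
The plan is to derive the stated inequality (the Polyak--{\L}ojasiewicz inequality) directly from the first-order characterization of $\mu$-strong convexity by minimizing the associated quadratic lower bound in closed form. Recall that for differentiable $f$, $\mu$-strong convexity is equivalent to the global lower bound
\[
f(y) \geq f(x) + \langle \nabla f(x),\, y - x\rangle + \frac{\mu}{2}\|y-x\|^2 \qquad \forall\, x,y \in \mathbb{R}^d.
\]
The right-hand side is a strongly convex quadratic in $y$ whose unique minimum is elementary to compute, while the infimum of the left-hand side over $y$ is by definition $f^*$; matching these two quantities is what yields the claim.

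Fixing an arbitrary $x \in \mathbb{R}^d$, I would first justify the pointwise-to-infimum step: if $f(y) \geq Q(y)$ for every $y$, where $Q(y) := f(x) + \langle \nabla f(x), y-x\rangle + \frac{\mu}{2}\|y-x\|^2$, and $x^\star$ is any minimizer of $f$ (or a minimizing sequence, if $f^*$ is not attained), then $f^* = f(x^\star) \geq Q(x^\star) \geq \inf_y Q(y)$. Next, I would compute the minimizer of $Q$ by setting $\nabla_y Q = \nabla f(x) + \mu(y-x) = 0$, yielding $y^\star = x - \tfrac{1}{\mu}\nabla f(x)$, and substituting back to obtain
\[
\inf_y Q(y) = Q(y^\star) = f(x) - \frac{1}{2\mu}\|\nabla f(x)\|^2.
\]
Combining the two bounds gives $f^* \geq f(x) - \tfrac{1}{2\mu}\|\nabla f(x)\|^2$, and rearranging produces $2\mu(f(x) - f^*) \leq \|\nabla f(x)\|^2$, which is the stated inequality.

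There is no serious obstacle here: this is essentially a one-line calculus computation once the reduction to the quadratic $Q$ is noticed, and the result is classical (due to Polyak). The only mild care-points are the minor technicality of whether $f^*$ is attained in $\mathbb{R}^d$, resolved by passing to a minimizing sequence, and the quick check that the Hessian $\nabla_y^2 Q = \mu I \succ 0$, so the critical point $y^\star$ is indeed the global minimizer of $Q$.
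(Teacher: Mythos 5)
Your proof is correct: minimizing the quadratic lower bound furnished by $\mu$-strong convexity over $y$ is the standard derivation of the Polyak--{\L}ojasiewicz inequality, and your handling of the attainment of $f^*$ and the positive-definiteness of $\nabla_y^2 Q$ is sound. The paper does not prove this lemma at all (it is stated with only a citation to Polyak), so there is nothing to compare against; your argument fills that gap with the classical one-line calculation.
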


\begin{lemma}
    (Lemma 4.2 in \cite{bottou2018optimizationmethodslargescalemachine})
    The iterations of the Stochastic Gradient Descent (with learning rate $\alpha$) for a $L$-Lipschitz continuous gradient function $f$ satisfy:
    \[
    \mathbb{E}[f(\theta^{k+1})]-f(\theta^k) \leq -\alpha \nabla f(\theta^k)^T \mathbb{E}[g(\theta^k)] +\frac{\alpha^2L}{2}   \mathbb{E}[\|g(\theta^k)\|_{2}^2]
    \]
    where $g$ is the estimation of $\nabla f$.
\label{lem:bottou}
\end{lemma}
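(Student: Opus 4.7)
The statement to prove is the SGD descent inequality: for an $L$-Lipschitz-gradient function $f$ and SGD update $\theta^{k+1}=\theta^k-\alpha g(\theta^k)$, conditional on $\theta^k$,
$\mathbb{E}[f(\theta^{k+1})]-f(\theta^k)\leq -\alpha\nabla f(\theta^k)^T\mathbb{E}[g(\theta^k)]+\tfrac{\alpha^2 L}{2}\mathbb{E}[\|g(\theta^k)\|^2]$.
The plan is entirely standard: derive the quadratic upper bound (descent lemma) implied by $L$-Lipschitz continuity of $\nabla f$, specialise it to the SGD step, and take conditional expectations.

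First I would establish the descent lemma: for any $x,y\in\mathbb{R}^d$,
\begin{equation*}
f(y)\leq f(x)+\nabla f(x)^T(y-x)+\tfrac{L}{2}\|y-x\|^2.
\end{equation*}
This follows by writing $f(y)-f(x)=\int_0^1 \nabla f(x+s(y-x))^T(y-x)\,ds$, splitting the integrand as $\nabla f(x)^T(y-x)+[\nabla f(x+s(y-x))-\nabla f(x)]^T(y-x)$, and bounding the bracketed term by Cauchy--Schwarz together with the Lipschitz hypothesis $\|\nabla f(x+s(y-x))-\nabla f(x)\|\leq Ls\|y-x\|$. Integrating the resulting $Ls\|y-x\|^2$ over $s\in[0,1]$ yields the $\tfrac{L}{2}\|y-x\|^2$ term.

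Next I would substitute $y=\theta^{k+1}$ and $x=\theta^k$, using $\theta^{k+1}-\theta^k=-\alpha g(\theta^k)$, to obtain the pointwise bound
\begin{equation*}
f(\theta^{k+1})\leq f(\theta^k)-\alpha\nabla f(\theta^k)^T g(\theta^k)+\tfrac{\alpha^2 L}{2}\|g(\theta^k)\|^2.
\end{equation*}
Taking expectation conditional on $\theta^k$ (so that $f(\theta^k)$ and $\nabla f(\theta^k)$ are treated as constants and only the stochastic estimator $g(\theta^k)$ is averaged over), and using linearity of expectation on both the inner-product term and the squared-norm term, yields exactly the claimed inequality after subtracting $f(\theta^k)$.

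No real obstacle arises: the only delicate point is the derivation of the descent lemma, which is the one nontrivial ingredient and relies crucially on the $L$-Lipschitz assumption on $\nabla f$ rather than on convexity. A minor pitfall worth flagging in the write-up is that the expectation here is conditional on $\theta^k$; if one instead takes unconditional expectations one must use the tower property, which gives the same inequality but with $\nabla f(\theta^k)$ and $f(\theta^k)$ also replaced by their expectations on the left-hand side. Since the lemma is quoted from \cite{bottou2018optimizationmethodslargescalemachine}, the write-up can be kept brief, essentially citing the descent lemma and performing the one-line substitution.
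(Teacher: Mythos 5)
Your proof is correct and is exactly the standard argument (descent lemma from the $L$-Lipschitz gradient, substitution of the SGD update, conditional expectation) used in the cited reference \cite{bottou2018optimizationmethodslargescalemachine}; the paper itself states this lemma with only a citation and gives no proof. Your remark that the expectation must be read as conditional on $\theta^k$ is a worthwhile clarification, since the paper's statement leaves this implicit.
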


\subsection{Stochastic Gradient Descent Convergence with shot-noise and attacks}
\label{sec:grad_conv_2}
Here we provide convergence guarantees for a VQA using gradient descent in presence of adversarial attacks. 
In an ideal gradient descent optimization, one has access to the real gradient value $\nabla f$, therefore the update rule for each optimization step $k$ is:
\[
\theta^{k+1}=\theta^{k}-\alpha\nabla f(\theta^{k})
\]
Nevertheless, when running gradient descent as optimizer in a practical setting the stochasticity of the estimation of  $\nabla f$ needs to be taken into account. In particular, by executing a VQA with a NISQ device the stochasticity source is the so-called shot-noise, namely the estimation error due to limited amount of measurements available for the computation of the expectation values. In this scenario, we rewrite the update rule as follows:
\[
\theta^{k+1} = \theta^{k} - \alpha\, g(\theta^{k}),
\]
where \( g(\theta^{k}) \) is a stochastic (shot-noise) estimate of the true gradient 
\( \nabla f(\theta^{k}) \).  Here we assume that $\mu_g=\mathbb{E}[g]= \nabla f$ and that the stochastic gradient estimator \( g(\theta^{k}) \) has covariance $\mathrm{Cov}[g(\theta^{k})] = \frac{\Sigma}{N_s}$; hence the total variance of the gradient estimate is
\[
\sigma_g^2 = \frac{\mathrm{Tr}[\Sigma]}{N_s} 
= \frac{1}{N_s} \sum_{i=1}^{N_P} \Sigma_{ii}
\]
where \( \Sigma_{ii} = \mathrm{Var}[g_i] \) denotes the variance of the $i$-th component.
By introducing in our analysis adversarial attacks we introduce the 
`corrupted' gradient $\hat{g}$, having:
\[ \theta^{k+1}=\theta^{k}-\alpha \hat{g}(\theta^{k})\]\\

Thanks to our Protocol and Lemma \ref{lem:rel_error} we have:
\begin{equation*}
\|\hat{g}(\theta^k) - g(\theta^k)\|_2 \le  \varepsilon, \|g(\theta^k)\|_2,
\end{equation*}
Then we define $\nu^k := \hat{g}(\theta^k) - g(\theta^k)$, so that $\|\nu^k\|_2 \le \varepsilon \|g(\theta^k)\|_2$.

In order to investigate the convergence and apply Lemma \ref{lem:bottou} we need to compute $ \nabla f^T \mathbb{E}[\hat{g}]$ and $\mathbb{E}[\|\hat{g}\|_{2}^2]$.

By linearity of expectation and Cauchy--Schwarz:
\begin{align}
\mathbb{E}[\nabla f^\top \hat{g}] 
&= \nabla f^\top \mathbb{E}[g] + \nabla f^\top \mathbb{E}[\nu] \notag\\
&= \|\nabla f\|_2^2 + \nabla f^\top \mathbb{E}[\nu] \notag\\
&\ge \|\nabla f\|_2^2 
- \|\nabla f\|_2 \|\mathbb{E}[\nu]\|_2 \notag\\
&\ge \|\nabla f\|_2^2 
- \|\nabla f\|_2 \, \mathbb{E}\|\nu\|_2 \notag\\
&\ge \|\nabla f\|_2^2 
- \varepsilon \|\nabla f\|_2 \, \mathbb{E}\|g\|_2 \notag\\
&\ge \|\nabla f\|_2^2 
- \varepsilon \|\nabla f\|_2 \sqrt{\mathbb{E}\|g\|_2^2} \notag\\
&= \|\nabla f\|_2^2 
- \varepsilon \|\nabla f\|_2 
\sqrt{\|\nabla f\|_2^2 + \sigma_g^2} \notag\\
&\ge (1-\varepsilon)\|\nabla f\|_2^2 
- \varepsilon \|\nabla f\|_2 \sigma_g
\end{align}
where we used:
(i) Jensen inequality $\|\mathbb{E}[\nu]\| \le \mathbb{E}\|\nu\|$,
(ii) the error bound $\|\nu\|_2 \le \varepsilon\|g\|_2$,
(iii) Cauchy--Schwarz $\mathbb{E}\|g\| \le \sqrt{\mathbb{E}\|g\|^2}$, (iv) $\mathbb{E}[\|g\|_2^2] = \|\nabla f\|_2^2 + \sigma_g^2$,
and (v) $\sqrt{a^2+b^2} \le a + b$ for $a,b \ge 0$.

\begin{align}
\mathbb{E}[\|\hat{g}\|_2^2] 
&= \mathbb{E}[\|g+\nu\|_2^2] \notag\\
&= \mathbb{E}[\|g\|_2^2] + 2\mathbb{E}[g^\top\nu] + \mathbb{E}[\|\nu\|_2^2] \notag\\
&\leq \mathbb{E}[\|g\|_2^2] + 2\varepsilon\mathbb{E}[\|g\|_2^2] + \varepsilon^2\mathbb{E}[\|g\|_2^2] \notag\\
&= (1+\varepsilon)^2\mathbb{E}[\|g\|_2^2] \notag\\
&= (1+\varepsilon)^2(\|\nabla f\|_2^2 + \sigma_g^2)
\end{align}
Now, starting from Lemma \ref{lem:bottou}:

\begin{align}
\notag
\mathbb{E}[f(\theta^{k+1})]-f(\theta^k)
&\leq -\alpha\mathbb{E}[\nabla f^\top \hat{g}] 
+ \frac{\alpha^2 L}{2}\mathbb{E}[\|\hat{g}\|_2^2] \\
&\leq -\alpha(1-\varepsilon)\|\nabla f\|_2^2 
+ \alpha \varepsilon\|\nabla f\|_2\sigma_g
+ \frac{\alpha^2 L}{2}(1+\varepsilon)^2(\|\nabla f\|_2^2+\sigma_g^2) \notag\\
&\leq -\frac{1}{2}\alpha\left[2(1-\varepsilon) - 
\alpha L(1+\varepsilon)^2\right]\|\nabla f\|_2^2
+ \alpha \varepsilon\|\nabla f\|_2\sigma_g
+ \frac{\alpha^2 L}{2}(1+\varepsilon)^2\sigma_g^2
\end{align}

To handle the cross-term $\alpha \varepsilon \|\nabla f\|_2 \sigma_g$ 
and avoid dependence on the iterates, 
we apply the following inequality 
$ab \le \frac{a^2}{2\lambda} + \frac{\lambda b^2}{2}$ \footnote{From $0 \leq (a-b)^2$ it follows that $ab \leq \frac{a^2+b^2}{2}$.
For any $\lambda>0$, replacing with $a'=a/\sqrt{\lambda},b'=b\sqrt{\lambda} $ yields: $ab \leq \frac{a^2}{2\lambda}+\frac{\lambda b^2}{2}$}  
(valid for any $\lambda > 0$ and $a,b\geq0$) obtaining
\begin{align}
\alpha \varepsilon \|\nabla f\|_2 \sigma_g
\le
\frac{\alpha \varepsilon}{2\lambda}\|\nabla f\|_2^2
+
\frac{\alpha \varepsilon \lambda}{2}\sigma_g^2 .
\end{align}
Choosing $\lambda = \frac{1-\varepsilon}{\varepsilon}$ and
substituting into the previous inequality gives
\begin{align}
\notag
\mathbb{E}[f(\theta^{k+1})]-f(\theta^k)
&\le -\frac{\alpha}{2} \left[2(1-\varepsilon)-
\alpha L(1+\varepsilon)^2-\frac{\varepsilon^2}{1-\varepsilon}\right]\|\nabla f\|_2^2\\
&\quad+\left(
\frac{\alpha(1-\varepsilon)}{2}
+
\frac{\alpha^2 L}{2}(1+\varepsilon)^2
\right)\sigma_g^2 .
\end{align}
The descent coefficient is positive whenever
$
\alpha L < \frac{2(1-\varepsilon)^2 - \varepsilon^2}{(1-\varepsilon)(1+\varepsilon)^2},
$ which is the condition stated in the main text.
Applying Lemma~\ref{lem:polyak}, adding $f(\theta^k)-f^*$ to both terms and applying the total expectation, yields:
\begin{align}
\notag \mathbb{E}[f(\theta^{k+1})]-f^*
&\leq \underbrace{\left(1-
\mu\alpha\left[2(1-\varepsilon)-\alpha L(1+\varepsilon)^2-\frac{\varepsilon^2}{1-\varepsilon}
\right]
\right)
}_{\gamma}
\left(\mathbb{E}[f(\theta^k)]-f^*\right)
\\
&\quad
+
\underbrace{
\left(
\frac{\alpha(1-\varepsilon)}{2}
+
\frac{\alpha^2 L}{2}(1+\varepsilon)^2
\right)
\sigma_g^2
}_{B}
\label{eq:conv_noise}
\end{align}
where $B$ is a constant independent of $k$.

Guarantees for convergence are met having $0<\gamma<1$, namely when
\begin{align}
\alpha L < \frac{2(1-\varepsilon)^2 - \varepsilon^2}{(1-\varepsilon)(1+\varepsilon)^2},
\qquad
\mu\alpha
\left[
2(1-\varepsilon)
-
\alpha L(1+\varepsilon)^2
-
\frac{\varepsilon^2}{1-\varepsilon}
\right]
< 1 .
\label{eq:conv_cond}
\end{align}
When $\varepsilon\rightarrow 0$ these reduce to the standard gradient descent 
conditions $\alpha L < 2$, recovering the original result.
Protocol~\ref{prot:VDvqa} allows to keep $\varepsilon$ bounded and tunable 
to satisfy such requirements.

Without shot-noise, namely when $\sigma_g^2\rightarrow 0$, strict convergence is achieved.
Nevertheless,  we cannot have a strict convergence due to the shot-noise variance, but rather a convergence to a neighborhood of $f^*$.

To define such a neighborhood let's rearrange the inequality 
\eqref{eq:conv_noise} as

\begin{align*}
z_{k+1}\leq \gamma z_{k} + B    
\end{align*}

where $z_{k}= \mathbb{E}[f(\theta^{k})]-f^*$. When $k\rightarrow \infty$, then $z_{\infty}(1-\gamma)\leq B$, 
so the final error is bounded as

\begin{align*}
z_\infty \leq \frac{B}{1-\gamma} =
\frac{
\left(
\frac{\alpha(1-\varepsilon)}{2}
+
\frac{\alpha^2 L}{2}(1+\varepsilon)^2
\right)\sigma_g^2
}
{
\mu\alpha
\left[
2(1-\varepsilon)
-
\frac{\varepsilon^2}{1-\varepsilon}
-
\alpha L(1+\varepsilon)^2
\right]
}
\end{align*}

The condition achieved for $\gamma<1$ allows to have a non-divergent error $z_{\infty}$.
We can also derive how many iterations we need to achieve a final error $\epsilon$.
Assuming negligible $\sigma_g^2$ and having $\gamma=(1-\rho)$ we can write:
\begin{align*}
\mathbb{E}[f(\theta^{k+1})]-f^*&\leq (1-\rho)^k  [ \mathbb{E}[f(\theta^{1})]-f^*]
\leq \exp(-k\rho)[\mathbb{E}[f(\theta^{1})]-f^*]
\end{align*}
since  $(1-\rho)^k\leq \exp(-k\rho)$.
To have a final error $\mathbb{E}[f(\theta^{k+1})]-f^*\leq \epsilon$, only $O(\log(1/\epsilon))$ iterations are required:
\begin{align*}
   k\geq \frac{1}{\rho} \log\left(\frac{\mathbb{E}[f(\theta^{1})]-f^*}{\epsilon}\right)
\end{align*}
It's important to notice that the iterations' scaling doesn't take into account the re-performed steps, considering instead only the iterations having a proper bound on the error variable.

Summarizing, to achieve convergence, having $L$ and $\mu$ fixed by the problem, is it enough to set properly the learning rate $\alpha$ and the term $\varepsilon$. The latter can be modified by tuning the number of traps of our step level protocol and Lemma \ref{lem:rel_error}.

\end{widetext}
\end{document}